\definecolor{links}{RGB}{30, 85, 255}
\definecolor{cites}{RGB}{0, 0, 139}
\definecolor{urls}{RGB}{255, 116, 20}
\pgfplotsset{compat=1.14}
\providecommand{\DontPrintSemicolon}{\dontprintsemicolon}
\newtheorem{theorem}{Theorem}
\theoremstyle{definition}
\newtheorem{claim}{Claim}
\DeclareMathOperator*{\argmax}{arg\,max}
\newcommand{\E}[1]{\mathbb E \left[ #1 \right]}
\newcommand{\M}{M} %To denote the output 
\newcommand{\Ms}{M_S} %To denote the greedy matching on the samples
\newcommand{\Eprime}{E'} %To denote the subset of edges whose value r_e is greater than the relative samples in M_S
\author{
	Paul D{\"u}tting\thanks{Google Research, Z\"urich, Switzerland. Email:   
\texttt{\href{mailto:duetting@google.com }{duetting}@google.com}}
	\and
	Federico Fusco\thanks{Department of Computer, Control, and Management 
Engineering ``Antonio Ruberti'', Sapienza University of Rome, 
Italy. Email:   
\texttt{\{\href{mailto:fuscof@diag.uniroma1.it}{fuscof},
        \href{mailto:lazos@diag.uniroma1.it}{lazos},
		\href{mailto:leonardi@diag.uniroma1.it}{leonardi},
        \href{mailto:rebeccar@diag.uniroma1.it}{rebeccar}\}@diag.uniroma1.it}
}
\and
Philip Lazos{$^\ddag$}
\and
Stefano Leonardi{$^\ddag$}
\and
Rebecca Reiffenh{\"a}user{$^\ddag$}
}
\title{
Prophet Inequalities for Matching with a Single Sample
\thanks{
This work was supported by the ERC Advanced 
Grant 788893 AMDROMA ``Algorithmic and Mechanism Design Research in 
Online Markets'', the MIUR PRIN project ALGADIMAR ``Algorithms, Games, 
and Digital Markets''.}
}
\date{}
\let\original@algocf@latexcaption\algocf@latexcaption
\long\def\algocf@latexcaption#1[#2]{%
  \@ifundefined{NR@gettitle}{%
    \def\@currentlabelname{#2}%
  }{%
    \NR@gettitle{#2}%
  }%
  \original@algocf@latexcaption{#1}[{#2}]%
}
\begin{document}
\maketitle
\begin{abstract}
We consider the prophet inequality problem for (not necessarily bipartite) matching problems with independent edge values, under both edge arrivals and vertex arrivals.
We show constant-factor prophet inequalities for the case where the online algorithm has only limited access to the value distributions through samples.
First, we give a $16$-approximate prophet inequality for matching in general graphs under edge arrivals that uses only a single sample from each value distribution as prior information. Then, for bipartite matching and (one-sided) vertex arrivals, we show an improved bound of $8$ that also uses just a single sample from each distribution. Finally, we show how to turn our $16$-approximate single-sample prophet inequality into a truthful single-sample mechanism for online bipartite matching with vertex arrivals.
\end{abstract}

\section{Introduction}
We study the  maximum-weight matching problem, i.e., maximizing the cumulative weight of an edge set $M$ in a graph such that to each vertex, at most one edge from $M$ is incident. 
This problem class comes up naturally in a wide range of applications, like matchmaking, two-sided markets, assigning jobs to workers, kidney exchange, housing markets, and the like. In many of these applications, the problem is \emph{online} in that the input is not fully known from the outset. Rather it is revealed over time, and decisions have to be made immediately and irrevocably. The online matching problem comes in many different flavors, for example one can study it for restricted graph classes like bipartite graphs (e.g., jobs/workers), under different online models (e.g., a fixed set of houses for which potential buyers become known only over time), and different assumptions regarding the arrival model (random/adversarial/with expiration times etc.).

\paragraph{Prophet Inequalities.}
We concentrate here on the \emph{prophet} model \citep{KrengelS77,KrengelS78,SamuelCahn84}, in which elements arrive in adversarial order and weights 
are drawn independently from known distributions. In the original definition, the adversary fixes the order and distributions up front.
The goal is to show a \emph{prophet inequality}, i.e. an online algorithm such that the expected weight of the computed solution is at least a $1/\alpha$-fraction of the expected weight of the optimal solution. 

The original work of \cite{KrengelS77,KrengelS78} and \cite{SamuelCahn84} has established a tight $2$-approximate prophet inequality for the problem of selecting one value out of a sequence of $n$ values drawn independently from not necessarily identical distributions. Motivated, in part, by a connection to posted-price mechanisms \citep{HajiaghayiKS07,ChawlaHMS10}, recent work has established prophet inequalities for a broad range of combinatorial problems, including matroids and polymatroids (e.g. \cite{Alaei11,KleinbergW12,DuttingK15,FeldmanSZ16}), bipartite and non-bipartite matching (e.g. \cite{ChawlaHMS10,KleinbergW12,FeldmanGL15,GravinW19,EzraFGT20}), and combinatorial auctions (e.g. \cite{FeldmanGL15,DuttingFKL20,DuttingKL20}).

Specifically, for the matching problem constant-factor prophet inequalities have been established for both (one-sided) vertex arrivals in a bipartite graph \citep{FeldmanGL15},  and edge arrivals in bipartite and non-bipartite graphs  \citep{ChawlaHMS10,KleinbergW12,GravinW19,EzraFGT20}.

\paragraph{The Single-Sample Paradigm.}
For many applications the assumption that the distributions are known exactly, as in the prophet model, is a rather strong one. It is natural to ask whether constant-factor prophet inequalities are also achievable with limited information about the distributions. 
A natural approach in this context (pioneered by \cite{AzarKW14}) is to assume that the online algorithm has access to a limited number of \emph{samples} from the underlying distributions, e.g., from historical data. 
Arguably, a minimal assumption under this approach is that the online algorithm has access to just a \emph{single} sample from each distribution. 

Surprisingly, even under this very restrictive assumption, strong positive results are possible. Importantly, for the classic single-choice prophet inequality one can obtain a factor $2$-approximate prophet inequality with just a single sample from each distribution \citep{RubinsteinWW20}, matching the best possible guarantee that can be given with full knowledge of the distributions.

The same problem with identical distributions was considered by \cite{CorreaDFS19}, who showed that with one sample from each distribution it is possible to achieve a $e/(e-1)$ guarantee. This bound was subsequently improved by \cite{KaplanNR20}, \cite{CorreaCES20}, and \cite{CorreaDFSZ21} with a tight bound appearing in a recent working paper by \cite{abs-2011-06516}. Together these results show that there is a small constant-factor gap between the best-possible guarantees with a single sample, and the optimal bound with full knowledge of the distributions shown in \cite{CorreaFHOV17}.

\citet{AzarKW14} give a general reduction showing that a certain type of secretary algorithms, so-called order-oblivious secretary algorithms, implies the existence of a single-sample prophet inequality with the same approximation guarantee. They use this reduction to derive single-sample prophet inequalities for a range of combinatorial problems.

\subsection{Results and Techniques}

We prove novel and improved constant-factor single-sample prophet inequalities for matching problems via a direct analysis of simple greedy-based algorithms.

At the heart of our results lies a relatively simple fact.
Recall that the greedy algorithm for maximum-weight matching orders all edges of a graph decreasingly by weight. Then, it traverses them in this order and whenever there is no conflict, the currently considered edge is added. It is not a new observation that this well-known and simple $2$-approximation exhibits properties close to those of an online algorithm---actually, when the order of edge's arrival is decreasing by weight, it does behave exactly like one. At the same time, the most natural idea when designing sample-based prophet inequalities is to accept a value during the online algorithm if it beats a threshold derived from the sample(s), e.g., the largest one. We combine the strategy of sample-thresholds with an analysis of the greedy algorithm on the set of \emph{all} values involved, those from the sample as well as the actual problem instance. 

Closest to our work, a similar approach of exploiting the fixed greedy order was previously applied by \cite{KorulaP09} in the context of secretary algorithms, i.e., in the absence of prior knowledge but with the additional assumption of random-order arrivals (see \cref{sec:discussion} for a more detailed discussion). 

We apply the above idea to prophet inequalities: somewhat surprisingly, this allows one to tie the run of the (online, adversarial-order) prophet algorithm closely to that of (offline, fixed-order) greedy, in the sense that the thresholds correspond to skipping certain values in the greedy algorithm. 

First, we analyze the most general setting of max-weight matching, where $G$ is an arbitrary (not necessarily bipartite) graph and the edges arrive one-by-one in adversarial order. We provide a single-sample prophet inequality that computes a matching of expected value at least a $\tfrac{1}{16}$-fraction of the expected value of the maximum weight matching (Theorem~\ref{thm:edgeGen}).

Second, with considerable adjustments in the analysis and a slightly different algorithm, we derive a tighter bound of $8$ for the special case of a bipartite graph and one-sided vertex arrival (Theorem~\ref{thm:vertexBip}).

Finally, since the bipartite graph can be interpreted as dividing the vertices into items and buyers, for this setting we provide also an incentive-compatible mechanism. Because our $8$-approximation appears to lose its approximation guarantee when naturally adjusted to be incentive-compatible, we employ instead a generalized view on the analysis for the edge-arrival case, accommodating also the behavior of the selfish buyers. This again results in a $16$-approximation (Theorem~\ref{thm:truthfulBip}).

\subsection{Related Work}\label{sec:discussion}

Our work is part of a broader literature in algorithm and mechanism design that seeks to obtain near-optimal algorithms %mechanisms 
with access to a single sample from the underlying distributions (e.g.,  \cite{AzarKW14, DhangwatnotaiRY15, Duetting20}). 

The most extensive, previous study of single-sample prophet inequalities is given by \cite{AzarKW14}. Based on the observation that the existence of an \emph{order-oblivious} secretary algorithm implies the existence of a single-sample prophet inequality with the same ratio, they provide constant approximations for a number of matroids. They also use a variation of their reduction argument to obtain a $6.75$-approximation for bipartite matching environments with edge arrivals. For this result they need to assume the vertex degree is limited by a constant $d$, and they require $O(d^2)$ samples, i.e., a constant number of samples per edge rather than just one.

The aforementioned results of \cite{KorulaP09} together with the insights from \cite{AzarKW14} can actually be employed to derive a (truly) constant bound for maximum-weight bipartite matching:
\cite{KorulaP09} present an $8$-approximate secretary algorithm for this problem with one-sided vertex arrival. 
Although they do not specifically aim for it, their algorithm and a version of their analysis is in fact order-oblivious. This, together with the reduction of \cite{AzarKW14} implies the existence of a 13.5-approximate, single-sample prophet inequality for the special case of bipartite graphs where one side is known offline.
This result, like the result for bipartite matching given by \cite{AzarKW14}, relies on the relation of single-sample prophet inequalities to order-oblivious secretary algorithms. 
In our work, we deviate from this traditional path. While we have the use of a greedy pricing in common with \cite{KorulaP09}, our algorithms are directly designed to work in the prophet model, avoiding the inherently wasteful detour over order-oblivious secretary algorithms.

In concurrent and independent work, \cite{CaramanisEtAl21} very recently provide prophet inequalities using an approach not unsimilar to ours. They, too, present a constant-factor approximation for matching in general graphs with edge arrivals, where they show a $32$-approximation single-sample prophet inequality. 
Aiming in a different direction than ours, they then 
provide improved results for several special matroids, among them transversal and truncated partition matroids.
We instead focus on the important case of bipartite matchings, and the problem of incentive compatibility.
Together, our results demonstrate the versatility and power of a direct, greedy-order based approach for the design of single-sample prophet inequalities.

In follow-up work, \citet{Kaplan21} also consider online weighted matching problems with samples. They show an improved bound of $5.83$ for the single-sample prophet inequality problem for bipartite matching with one-sided vertex arrivals, and an improved bound of $13.5$ for matching in general graphs with edge arrivals.

\section{Preliminaries}   

We have a weighted graph $G = (V,E)$, where each edge $e$'s weight $r_e$ is sampled independently from an unknown distribution $D_e$. The only information about these distributions available to the algorithm beforehand is a single sample $s_e$ drawn from $D_e$ for each edge. 
$S$ and $R$ are both drawn independently from the product distributions on the $D_e$ and denote the set of the realized weights of the samples ($S$) and of the actual weights ($R$), respectively. For the sake of simplicity, in the following we often refer to the $R$ graph, which is simply graph $G$ whose edge weights are given by $R$. The $S$ graph is defined analogously but using as weights the values in $S$. We consider the following two models:

\paragraph{General graph and edge arrivals.}
All the vertices $V$ are known in advance, but the edges arrive one-by-one in adversarial order. Before the online procedure starts, the algorithm has access to the sampled edge weights $S$. In the online phase, edges arrive in some (adversarial) order. Upon arrival of the generic edge $e$, the corresponding weight $r_e$, which is drawn from $D_e$, is revealed and the decision whether to add it the solution is made immediately and irrevocably. 

\paragraph{Bipartite graph with vertex arrivals.} The vertices are divided into two sets: buyers $I$ (left vertex set of $G$), and items $J$ (right vertices). 
The $J$ side of the graph (the items) is fixed and available from the beginning together with the realized $S$ graph. On the other hand, the buyers $I$ appear in adversarial online fashion. Each time buyer $i$ arrives, the algorithm either immediately and irrevocable match her with any item $J$ that is still available, or let her leave forever unmatched. Contrary to our edge arrival model, the weights of \emph{all} incident edges of buyer $i$ are revealed at once.
In this section, we also consider the issue of incentive-compatibility or truthfulness if the buyers act as selfish agents. A mechanism (assignment rule together with a pricing rule) is called \emph{IC} or \emph{truthful} if for every agent, her utility (value of item assigned minus price paid) is maximized by reporting her true valuation for all items on sale, i.e., for all adjacent edges in the bipartite graph.

\paragraph{Competitive analysis.} Our objective is to maximize the weight of the output matching $M$, i.e. to find a one-on-one assignment that maximizes the sum of weights $w(M)=\sum_{e\in M} r_e$. Given the online nature of the problem, we follow the literature on prophet inequalities and evaluate the performance of our algorithms in terms of the \emph{competitive ratio}, i.e. the ratio between the expected weight of an offline maximum weight matching and the expected weight of the online matching produced by the algorithm. Clearly, this notion of performance depends on the order in which the elements arrive, so we give bounds that hold for all possible (adversarial) orders. All bounds that we obtain are related to the expected value of the matching computed on the sample values. The bounds  hold also for an {\em adaptive adversary} that knows in advance the whole sequence of values that will be realized, and it will release the next item in the sequence while knowing all the decisions previously made by the algorithm.

\begin{algorithm}[t] 
\DontPrintSemicolon
\caption{Prophet Matching}
    Set $\Eprime=\emptyset$, $\M= \emptyset$\;
    Compute the greedy matching $\Ms$ on the graph with edge weights according to sample $S$\;
    \For{all $e = (u,v) \in \Ms$}{
    Set $p_{u}=p_{v}=s_e$.}
    \For{all $u \in V$, $v$ unmatched in $\Ms$}{
    Set $p_u=0$}
    \For{each arriving $e = (u,v) \in E$}{
        \If{$r_e > \max\{p_u,p_v\}$}{
            Add $e$ to $\Eprime$. \tcp{This is just for the analysis.}
            \If{$u,v$ are not matched in $\M$}{
                add $e$ to $\M$ with weight $r_e$
            }
        }
    }
    \Return $\M$
\label{AlgoOn}
\end{algorithm}

\section{Edge Arrival in General Graphs}
Our solution for this problem, \cref{AlgoOn} outlines very well the idea we described above: 
in order to bridge the gap between the fixed order employed by greedy (which is a $2$-approximation) and the adversarial one of prophet inequalities (where greedily adding edges would be arbitrarily bad), we utilize the single sample $S=\{s_e|e\in E\}$ that we have from the product over all edge distributions. Recall that we assume the $D_e$ to be independent.

The algorithm computes an (offline) greedy matching $\Ms$
on the graph with edge weights $S$. Then, for each vertex $v$, it interprets the weight of the edge incident to it in the greedy solution as a price $p_v$ for the adversarial-order online algorithm: whenever an edge $e = (u,v)$ arrives online and both endpoints are free, it is added to the matching $\M$ if and only if $r_e >\max\{p_v, p_u\}$\footnote{We assume to break ties uniformly at random before running the algorithm, i.e. for the set of all drawn numbers that have value $x$, pick a permutation $\pi$ of those uniformly at random and define a draw of the same value to be larger than another if and only if it comes first in $\pi$.}.
While the algorithm is simple, the analysis is a bit more challenging.
As pointed out before, the greedy method can be seen as an offline algorithm as well as an online algorithm with the (extremely simplifying) assumption of a fixed and weight-decreasing arrival order. We combine such a shift of viewpoint with a suitable reformulation of the random processes involved. More precisely, we look at our (online) algorithm as the byproduct of a run of a (fixed-order!) greedy strategy on a random part of a fictitious graph that contains each edge \emph{twice} (once with the true weight $r_e$, and once with the sampled weight $s_e$).

To help with the analysis, together with the solution $\M$, the algorithm builds $\Eprime$, a subset of $E$ containing all the edges such that $r_e > \max\{p_v, p_u\}$, i.e. all edges that are price-feasible. Note that $\Eprime$ is a superset of the actual solution $\M$ and may not be a matching.

\paragraph{Equivalent offline algorithm.} In order to analyze the competitive ratio of \cref{AlgoOn}, we state an offline, greedy-based algorithm with similar properties that is easier to work with.
This offline procedure considers an equivalent stochastic process generating the edge weights in $S$ and $R$. Instead of first drawing all the samples, and then all the realized weights, we consider a first stage in which two realizations of each random variable $D_e$ are drawn. Then, they are sorted in decreasing order according to their values, breaking ties at random as explained above.
Finally, the offline algorithm goes through all these values in that order; each value is then associated with $S$ or $R$ by a random toss of an unbiased coin. Consequently, the second, smaller weight of the same edge arriving later is associated to the remaining category with probability $1$. It is not hard to see that the distributions of $S$ and $R$ are identical for both random processes.
The pseudocode for this offline procedure is given in \Cref{AlgoOff}. As already mentioned, it mimics in an offline way the construction of $\Ms$ and $\Eprime$ of \Cref{AlgoOn}, as formalized in the following Claim.

\begin{algorithm}[t]
\DontPrintSemicolon
\caption{Offline Matching}
\label{AlgoOff}
    Set $\Eprime=\emptyset$, $\Ms= \emptyset$, $\M = \emptyset$ $V^S=V$ \;
    For each $e\in E$, draw from $D_e$ two values $a_{e,1}$ and $a_{e,2}$\;
    Order $A=\{a_{e,1}, a_{e,2}|e\in E\}$ in decreasing fashion \;
    Mark each edge $e\in E$ as unused\;
    \For{each value $a\in A$ in the above order}{
        \If{$a$ corresponds to edge $e = (u,v)$ s.t. $e$ is unused, $u\in V^S$ and $v\in V^S$ }{
        {Flip a coin \;
            \eIf{Heads}
            {
                Mark $e$ as \emph{$R$-used}\;
                Add $e$ to $\Eprime$ with  weight $a$\;
            }
            {
                Mark $e$ as \emph{$S$-used}\;
                Add $e$ to $\Ms$ with  weight $a$\;
                Remove $u, v$ from $V^S$  \tcp{Blocking these vertices for use in $M_S$ and $\Eprime$.}
            }
        }
        }
        \ElseIf{$a$ corresponds to $e = (u,v)$ that has been observed before, $u\in V^S$ and $v\in V^S$}{
            Remove $u$ and $v$ from $V^S$   \tcp{Blocking these vertices for use in $M_S$ and $\Eprime$.}
            Add $e$ to $\Ms$ with weight $a$
        }
    }
    \For{each $e$ in the same order as \Cref{AlgoOn}}{
        \If{$e$ in $\Eprime$ and $\{e\} \cup \M$ is a matching }
        {Add $e$ to $\M$ with weight $r_e$}}
    \Return $\M$.
\end{algorithm}
\begin{claim}
\label{claim:equivalence}
    The sets $\Eprime$, $\Ms$ and $\M$ are distributed in the same way when computed by \Cref{AlgoOn} as when computed by \Cref{AlgoOff}.
\end{claim}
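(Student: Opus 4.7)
My plan is to couple the randomness driving \Cref{AlgoOn} and \Cref{AlgoOff} so that they produce pointwise identical outputs, which then implies equidistribution of the triple $(\Eprime, \Ms, \M)$. The coupling exploits the elementary fact that two independent draws $s_e, r_e \sim D_e$ have the same joint law as a pair of i.i.d.\ draws $(a_{e,1}, a_{e,2}) \sim D_e$ followed by a fair random assignment of one to $s_e$ and the other to $r_e$. \Cref{AlgoOff}'s coin flip in Branch~1 performs exactly this random assignment on the larger value $a_{e,1}$ of each edge; for an edge whose larger value is processed with one endpoint already outside $V^S$, no coin is flipped by the algorithm, and I would append an independent fair coin to define $(s_e, r_e)$, whose outcome will turn out to be irrelevant to the output.

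With $(s_e, r_e)_e$ fixed by this coupling, the next step is to verify that \Cref{AlgoOff} reproduces $\Ms$, $\Eprime$, and $\M$. For $\Ms$, I note that $V^S$ shrinks only when an edge is $S$-used, and every such edge enters $\Ms$ at its $s$-value; since \Cref{AlgoOff} processes all weights in decreasing order, the $s$-values are processed in decreasing order among themselves (unaffected by interspersed $R$-values), so the rule adding $e$ to $\Ms$ when both endpoints lie in $V^S$ at time $s_e$ is exactly the offline greedy on the $S$-graph. Hence $\Ms$ equals the greedy matching on $S$, and the prices $p_u$ agree with those of \Cref{AlgoOn}.

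For $\Eprime$, \Cref{AlgoOff} inserts $e$ only in Branch~1 with a heads flip, which requires $r_e > s_e$ and $u, v \in V^S$ at time $r_e$. Since a vertex $w$ leaves $V^S$ precisely when the $\Ms$-edge incident to $w$ (of $s$-value $p_w$) is processed, the condition $u, v \in V^S$ at time $r_e$ is equivalent to $r_e > \max(p_u, p_v)$. To match \Cref{AlgoOn}'s $\Eprime = \{e : r_e > \max(p_u, p_v)\}$, I would show that the extra inequality $r_e > s_e$ is automatic under $r_e > \max(p_u, p_v)$: if $e \in \Ms$ then $p_u = p_v = s_e$; if $e \notin \Ms$ then the greedy on $S$ rejected $e$ because of a heavier $\Ms$-edge at $u$ or $v$, whose $s$-value equals $\max(p_u, p_v)$ and exceeds $s_e$. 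The two $\Eprime$'s thus coincide, and since $\M$ is built deterministically from $\Eprime$ by traversing edges in the adversarial order, the two $\M$'s coincide as well.

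The main obstacle I expect is justifying the implicit coin for the skipped edges: showing that edges whose larger value arrives with one endpoint already outside $V^S$ cannot contribute to any output. For such an edge $e = (u,v)$ with $u \notin V^S$ at time $a_{e,1}$, an $\Ms$-edge at $u$ with $s$-value $p_u > a_{e,1} \geq r_e$ is already in place, so $e \notin \Eprime$ by the online threshold; and $e \notin \Ms$ because the greedy on $S$ would likewise reject $e$ at time $s_e \leq a_{e,1} < p_u$. Hence the implicit assignment of $(s_e, r_e)$ is inconsequential, and the coupling yields the desired equidistribution.
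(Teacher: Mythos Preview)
Your proof is correct and follows the same coupling strategy as the paper's: identify the coin flips in \Cref{AlgoOff} with the random assignment of the two i.i.d.\ draws per edge to $R$ and $S$, then verify pointwise that $\Ms$ is greedy on the $S$-graph and that membership in $\Eprime$ reduces to the threshold condition $r_e>\max(p_u,p_v)$. Your treatment is in fact more thorough than the paper's in two places it leaves implicit: you explicitly handle the edges whose larger value arrives with an endpoint already outside $V^S$ (where \Cref{AlgoOff} never flips a coin, so an auxiliary coin is needed for the coupling and must be shown irrelevant), and you explicitly check that $r_e>\max(p_u,p_v)$ forces $r_e>s_e$, which is what guarantees the offline procedure sees $r_e$ first and flips Heads.
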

\begin{proof}
    Consider any run of \Cref{AlgoOff}. Every edge value in $A$ is either associated to $R$ or to $S$, or does not belong to any set of interest. In fact, an edge value $a$ is associated to the $R$ graph if it is $R$ used or if the other value for the same edge gets $S$ used; the symmetric holds for $S.$ It may happen that an edge is never associated to either $R$ or $S$, but it means that one of its endpoints has been added to $M_S$, so the edge does not belong to $M_S$ (because $M_S$ is a matching) nor to $E'$ and $M$ (because it means that one of the endpoint is blocked by a large price). 
    Since the $S$ and $R$ values are all drawn independently from the product distribution on the $D_e$, for $e \in E$, they are distributed in the same way no matter if we draw them first and then decide which set they belong to (like in \Cref{AlgoOff}), or the other way around (like in \Cref{AlgoOn}).
    
    Once we have the equivalence of the underlying stochastic processes, we just need to show that for {\em any} realization of the $S$ and $R$ value, the corresponding sets of edges of interests are exactly the same in the offline and online settings. 
    To avoid complication, we deterministically use the same tie breaking rules in both the online and offline case. 
    
    Notice that if we restrict the offline algorithm to consider only the edge weights in the $S$ graph, we recover exactly the procedure to generate a greedy matching on the $S$ graph. This is true because what happens to the $R$ used edges, or to the second realization of $S$ used edges (i.e. all values associated with $R$) has no influence on $\Ms$. Hence, we can claim that the two versions of $\Ms$ in the different algorithms indeed follow the same distribution. Actually, this is even true step-wise: at any point in time during \Cref{AlgoOff}, $\Ms$ contains the greedy matching on the $S$ graph {\em restricted} to the edges whose $s_e$ is greater than the value being considered at that time. 
    
    We conclude the proof by showing that the two versions of $\Eprime$ coincide. Once we have that, it is enough to observe that the matchings $M$ are extracted in the same way from $\Eprime$. To this end, consider an iteration of \Cref{AlgoOn} in which an edge $e$ in the $R$ graph arrives.
    We argue that $e$ is added to $\Eprime$
    in the offline procedure if and only if the online procedure does the same. 
    
    Fix edge $e=(u,v)$ added to $E'$ by \Cref{AlgoOn}. Then, the online version of $\Ms$ contains no edges incident to $u$ or $v$ with larger weight than $e$, because this would imply a price higher than $r_e$. Since the $\Ms$ are the same in both algorithms, when \Cref{AlgoOff} considers $e$, it holds $u,\, v\in V^S$ and \Cref{AlgoOff} will also add $e$. The other direction is analogous.
\end{proof}
Given the above result, we will be using $\Eprime$, $\Ms$ and $\M$ mainly to refer to \cref{AlgoOff}, but the same would apply to \cref{AlgoOn}.

We are now ready for the main Theorem of the Section, in which we show that the competitive factor of \Cref{AlgoOn} is $16$, i.e. for any adversarial arrival order of the edges, in expectation the online algorithm retains at least a $1/16$ fraction of the value of optimal offline matching $OPT$. The high level idea of the proof lies in relating the optimal offline matching and our solution by using $\Ms$ (which yields a 2 approximation of $OPT$) and a carefully chosen subset of $\Eprime$, the {\em safe} edges. 

\begin{theorem}
\label{thm:edgeGen}
    For the problem of finding a maximum-weight matching in a general graph $G$, in the online edge arrival model, \cref{AlgoOn} is $16$-competitive in expectation, i.e. 
        \[
            16 \cdot \E{w(\M)} \ge \E{OPT}, 
        \]
        where $OPT$ is the weight of an optimal matching in $G$.
\end{theorem}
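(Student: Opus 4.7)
The plan is to analyze \cref{AlgoOff} in place of \cref{AlgoOn}, which is legitimate by \cref{claim:equivalence} since $\Ms$, $\Eprime$, and $\M$ have the same joint distribution under the two procedures. The target inequality $16\,\E{w(\M)} \ge \E{OPT}$ naturally decomposes into two multiplicative parts,
\[
  \E{w(\Ms)} \;\ge\; \tfrac{1}{2}\,\E{OPT}
  \qquad\text{and}\qquad
  \E{w(\M)} \;\ge\; \tfrac{1}{8}\,\E{w(\Ms)},
\]
the first of which is the warm-up and the second the substantive step.

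For the first inequality, I would observe that $S$ and $R$ share the same product distribution $D$, so the expected weight of a maximum-weight matching on the $S$-graph equals $\E{OPT}$. Since $\Ms$ is a greedy matching on $S$, and greedy is a $2$-approximation of maximum-weight matching in any graph, combining the two facts yields the bound.

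For the second, harder inequality I plan to introduce a ``safe'' sub-matching $\Msafe \subseteq \Eprime$ extracted from the coupled run of \cref{AlgoOff}. The idea is to pair each $f=(u,v)\in \Ms$ with an appropriately chosen counterpart edge in $\Eprime$ incident to $u$ or $v$, in such a way that the counterparts themselves form a matching. By the symmetry of the coin flips — each candidate value is assigned with equal probability to the $S$-side (contributing to $\Ms$) or to the $R$-side (contributing to $\Eprime$) — the expected weight of $\Msafe$ should be a constant fraction of $\E{w(\Ms)}$. To then pass from $\Msafe$ to $\M$, I exploit the defining property of $\Eprime$: every edge $e=(u,v)\in\Eprime$ satisfies $r_e > \max\{p_u, p_v\}$, so whenever a safe edge is blocked online by some other $\M$-edge, that blocker's weight already exceeds the price at the shared endpoint. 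This yields a charging from lost safe edges to selected $\M$-edges, losing only a constant factor regardless of the adversarial arrival order.

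The delicate point, and the main obstacle I expect, is the precise choice of $\Msafe$. Because $\Eprime$ may contain many edges incident to a single vertex — all edges whose $r$-values exceed the same threshold $p_v$ — one cannot simply take all of $\Eprime$ as safe, so a matching must be selected judiciously. Getting both halves of the argument to work simultaneously — that $\E{w(\Msafe)}$ compares favorably to $\E{w(\Ms)}$ via the coin-flip symmetry, and that $\E{w(\M)}$ captures a constant fraction of $\E{w(\Msafe)}$ via the blocker-charging argument — so that the combined multiplicative loss is exactly $8$ is where the effort goes. Plausibly, the factor $8$ arises as the product of a factor $2$ from the coin-flip allocation, a factor $2$ from restricting the multigraph $\Eprime$ to a matching, and a factor $2$ from the loss incurred by the adversarial online order.
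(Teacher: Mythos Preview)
Your high-level decomposition matches the paper exactly: reduce to \cref{AlgoOff} via \cref{claim:equivalence}, use greedy's $2$-approximation on $S$ to get $\E{w(\Ms)}\ge\tfrac12\E{OPT}$, and then argue $\E{w(\M)}\ge\tfrac18\E{w(\Ms)}$ through a ``safe'' subset of $\Eprime$ and a charging argument. But the two concrete mechanisms you propose for the second step both have gaps.

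First, the plan to ``pair each $f=(u,v)\in\Ms$ with a counterpart edge in $\Eprime$'' and then invoke coin-flip symmetry is circular: which edges lie in $\Ms$ already depends on the coin flips, so conditioning on $f\in\Ms$ destroys the symmetry you want to exploit. The paper sidesteps this by working \emph{per vertex} rather than per $\Ms$-edge: for each vertex $v$, the \emph{first considered} incident edge is determined before its coin is flipped, and then with probability exactly $\tfrac12$ it lands in $\Eprime$. This gives $\E{\sum_v val_{\Eprime}(v)}\ge \tfrac12\E{\sum_v val_{\Ms}(v)}=\E{w(\Ms)}$ cleanly, without any reference to the realized $\Ms$.

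Second, your charging from $\Msafe$ to $\M$ is too weak as stated. You observe that any $\M$-blocker at a shared endpoint beats the \emph{price} there, but beating the price does not imply beating the weight of the safe edge it displaces---both the safe edge and the blocker beat the same price, and a priori either could be larger. The paper's definition of ``safe'' is engineered precisely to fix this: $e_v=(u,v)$ is declared safe only if it is the \emph{unique} $\Eprime$-edge at $v$ and there is \emph{no smaller} $\Eprime$-edge at $u$. Then any blocker must sit at $u$ and must have larger weight than $e_v$, so the charge goes through. Each of these two extra conditions holds with probability at least $\tfrac12$ (the next considered incident edge on each side lands in $S$), giving the factor $4$; a final factor $2$ accounts for a safe edge being the safe edge of both its endpoints. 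So the $8$ is $1\times 4\times 2$, not $2\times 2\times 2$.
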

\begin{proof}
For any run of the offline algorithm, we say that an edge is {\em considered} if it is added to either $\Ms$ or $\Eprime$. Furthermore, let $V'$ be the subset of $V$ s.t. the according vertices all have at least one considered edge. Then, for each vertex $v$, we have the following inequality: 
\[
    \mathbb{P}[v\in E'| v\in V'] = 1/2.
\]
In fact, fix the point in the execution of the offline algorithm where the decision is made that $v\in V'$. This is the point in time the offline algorithm considers the first incident edge to $v$, implying that when we arrive at this value $a\in A$, both of the corresponding edges' endpoints are still free. Right after deciding to consider a value, a fair coin is flipped to determine whether the according edge is in $E'$ or $M_S$. 

Next, we claim the set $E'$ has expected weight at least $w(\Ms)/2$. This is true because for each considered edge, we can have one of two cases: either $e$ is added to both $\Ms$ and $\Eprime$, and in that case $r_e>s_e$, or $e \in  \Ms \setminus \Eprime$. In the latter case, the decision to which set $e$ is added was made by flipping a fair coin. This means, $\Ms$ and $\Eprime$-edges are determined in the exact same way, except that after already adding an $R$-version of $e$ to $\Eprime$, the $S$-version might still be used for $\Ms$.

Now define $val_{\hat{E}}(v)$ for any $v\in V$ and $\hat{E}\subseteq E$ to denote the largest weight of an edge incident to vertex $v$ in edge set $\hat{E}$, or $0$ if there is no such edge.
Then, it holds 
\begin{equation}
\label{eq:val_e}
    \mathbb{E}\left[\sum_{v\in V} val_{\Eprime}(v)\right]\geq \frac 12 \mathbb{E}\left[ \sum_{v\in V}val_{M_S}(v)\right]= \mathbb{E}\left[w(M_S) \right].
\end{equation}

We define $e_v$ as the edge (if any) that defines the value $val_{E'}(v)$, which is always the first edge incident to $v$ that was considered by the offline algorithm. Slightly abusing the notation, we define $r_{e_v} = 0$ if the edge $e_v$ does not exist.
We rewrite \cref{eq:val_e} as 
\begin{equation}
    \mathbb{E}\left[\sum_{v\in V} r_{e_v}\right]\geq \mathbb{E}\left[ w(M_S) \right],
\label{eq:r_ev}    
\end{equation}
note that some edges might occur twice in above sum.
Fix now some edge $e_v$, we call $e_v=(u,v)$ \emph{safe} if for $v$, $e_v$ is also the only edge in $E'$ and for $u$, there is no smaller incident edge in $E'$.
Focus on node $v$. $e_v$ is the first edge in $E'$ by definition, and it is the last if the next considered edge incident to $v$ is associated to $S$. This happens with probability at most $\frac 12$  since either this is the first copy of the edge seen, or it must be in $S$ anyway.

The latter happens if and only if the next edge after $e_v$ incident to $u$ in the offline algorithm is decided to be in $S$, which is also the case with probability at least $1/2$ for the same reasons as before.
Both events do depend on each other - one might influence the other in terms of which actual edge is the considered one or at which point in time the consideration takes place. But for the $R$- $S$- decision on the next incident edge, it holds that it is either made only at the point in time that edge is considered (via a fair coin flip), or it has been made before (i.e., edge is $R-$used). In the latter case, however, the second copy must be in $S$. 
This means, the probability for the next considered edge to be in $S$ can never be manipulated to be below $1/2$.

We showed that with probability at least $1/4$, any edge $e_v$ is safe. Fix now any realization of the edge weights in A, and consider only the randomness of the
coin tosses. Then,
\begin{align*}
    \mathbb{E}[\sum_{v\in V} r_{e_v} \mathbbm{1}_{\{e_v\text{ is safe}\}}] 
    &= \sum_{v\in V}\sum_{e\in E} r_e \cdot
        \mathbb{P}[e_v \text{ is safe}|e_v=e] \cdot \mathbb{P}[e_v=e]\\
    &\geq \sum_{v\in V}\sum_{e\in E} 
        \frac {r_e}4 \cdot \mathbb{P}[e_v=e] 
        = \frac 14 \E{\sum_{v \in V} r_{e_v}}.
\end{align*}

Finally, we need to take the set of $e_v$-edges and consider which subset of them will be part of the matching chosen by the prophet inequality. 
Note that in the \emph{safe} set, the only edge incident to vertex $v$ that might be chosen in the online algorithm is $e_v=(u,v)$, by definition. 
Now either, this indeed happens in the algorithm, or at the time $e_v$ arrives, vertex $u$ is already matched via some edge $e_u'$, which must be of larger weight than $e_v$ due to the definition of \emph{safe}.
Now, this also implies that $e_u$ is not safe, since apparently more than one edge is incident to $u$ in $E'$. 
All in all, we have that for each vertex $v$ s.t. $e_v$ is safe, either $v$ itself or a unique vertex $u$ with $e_u$ not safe will be matched via an edge of weight at least $r_{e_v}$.
Since in case we indeed have that a safe edge $e_v$ is added to $M$, $e_v$ can possibly also be the safe edge for its other endpoint (and our sum thus counts the weight of this edge twice), the actual matching computed by the algorithm recovers at least half of the summed-up values $val_{E'}(v)$ of the safe matching:
\[
\mathbb{E}\left[ w(M) \right]
    \geq 
    \frac 12 \mathbb{E}[\sum_{v\in V} r_{e_v} \mathbbm{1}_{\{e_v\text{ is safe}\}}] \ge \frac{1}{8} \E{w(\Ms)}
    \ge \frac{1}{16}\E{OPT}.  \qedhere
\]
\end{proof}

\section{Vertex Arrival in Bipartite Graphs}

    The main difference in the bipartite setting is, as noted in the introduction, that instead of having individual edge arrivals, one side of the graph is fixed and at each step a vertex arrives with all of its incident edge weights revealed. We refer to the fixed side as the `items' and the online side as the `buyers', even though in this section there is no incentive analysis. A truthful version of these results, with slightly worse approximation guarantees, is presented in \cref{sec:truthful}.
    
    As before, the sample $S$ is used to calculate prices, on both sides: items (fixed side vertices) have prices to \emph{protect} them from being sold too cheaply, while buyers (incoming vertices) have prices they need to beat in order to participate in the market, in addition to beating the item prices.

\begin{algorithm}
\DontPrintSemicolon
    Set $\Eprime=\emptyset$\;
    Compute the greedy matching $\Ms$ on the graph with edge weights according to sample $S$\;
    \For{all $e = (i,j) \in \Ms$}{
        Set $p_{i}=p_{j}=s_e$.
        }
    \For{all vertices $k$ not matched in $\Ms$}{
        Set $p_{k}=0$.
        }  
    \For{each arriving $i^* \in I$}{
        Let $\hat e = \argmax \{ r_e| \ e = (i^*,j)\text{ and } r_e > max\{p_{i^*},p_j\}\}$\;
        Add $\hat e$ to $\Eprime$, with weight $r_{\hat e}$.     \tcp{Only used for the analysis, just as before.}
        Let $j^*$ be the item in $\hat e$\;
        \If{$j^*$ is still free in $\M$}{
            Add $(i^*,j^*)$ to $\M$ with weight $r_{\hat e}$
        }
    }
    \Return $\M$
\caption{Bipartite Prophet Matching}
\label{AlgoBipOn}
\end{algorithm}

    We provide some intuition on the objects of interest that appear in the pseudo-code of Algorithm \ref{AlgoBipOn}. $M_S$ is the greedy matching on the $S$ graph, and it does not depend in any way on the actual edge weights $R$. $\Eprime$ is a subset of the edges of the $R$ graph: each buyer is associated with at most one item, but there may be items associated to more than one buyer. Solving these conflicts in $\Eprime$ in favor of the first arriving edge we obtain $\M$. 
    Note that $\M$ is computed in an online fashion.
  
    In order to relate $\Eprime$ and $\M$ with $OPT$, similarly to the previous section, we consider an offline version of the algorithm, i.e. \cref{AlgoBipOff} which exhibits the same distribution over the relevant allocations and can be analyzed more easily.

\paragraph{Equivalent offline algorithm.} The offline algorithm interleaves the building of a greedy matching $\Ms$ with that of the (possibly non feasible) allocation $\Eprime$. All the values are drawn in advance, but they are assigned to $S$ and $R$ only upon arrival.

\begin{algorithm}
\DontPrintSemicolon
\caption{Bipartite Offline Matching}
\label{AlgoBipOff}
    Set $\Eprime=\emptyset$, $\Ms= \emptyset$, $\M=\emptyset$\;
    $I^R=I^S=I$, $J^S=J$ \;
    For each $e\in I\times J$, draw from $D_e$ two values $a_{e,1}$ and $a_{e,2}$\;
    Order $A=\{a_{e,1}, a_{e,2}|e\in E\}$ in a decreasing fashion \;
    Mark each edge $e\in E$ as unused\;
    \For{each value $a$ from $A$ in above order}{
        \If{$a$ corresponds to edge $e$ s.t. $e$ is unused}{
        Flip a coin\;
        \eIf{Heads}{
            Mark $e$ as $R$-used
        }
        {Mark $e$ as $S$-used}
        }
        \If{$e$ has just been marked \emph{$R$-used}, or was previously \emph{$S$-used}, $i\in I^R$ and $j\in J^S$}{
            Add $e$ to $\Eprime$ with weight $a$ and remove $i$ from $I^R$.
            }
        \If{$e$ has just been marked \emph{$S$-used}, or was previously \emph{$R$-used}, $i\in I^S$ and $j\in J^S$}{
        Remove $j$ from $J^S$, and remove $i$ from $I^S$ and from $I^R$\;
            Add $e$ to $\Ms$ with weight $a$
        }
    }
    \For{each $e$ in the same order as \Cref{AlgoOn}}{
        \If{$e$ in $\Eprime$ and $\{e\} \cup \M$ is a matching }
        {Add $e$ to $\M$ with weight $r_e$}}
    \Return $\M$.
\end{algorithm}

As before, we observe that the allocations $\Eprime$ and $\M$ are distributed in exactly the same way, whether they are computed by \cref{AlgoBipOff} or \cref{AlgoBipOn}.
    
\begin{claim}
    The sets $\Eprime$, $\Ms$ and $\M$ are distributed the same way when computed by \Cref{AlgoBipOn} as when computed by \Cref{AlgoBipOff}.
\end{claim}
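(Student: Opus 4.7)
The plan is to mirror the proof structure of \cref{claim:equivalence}, decomposing the equivalence into two parts: (i) the joint distributions of the values underlying the two procedures are identical, and (ii) for any fixed realization of those values (and a shared deterministic tie-breaking rule), the sets $\Ms$, $\Eprime$, $\M$ produced by the two algorithms coincide pointwise. Part (i) follows exactly as before: drawing the full pair $(a_{e,1}, a_{e,2})$ for every edge and then assigning each value to $S$ or $R$ via an unbiased coin yields the same joint law on $(S,R)$ as drawing $S$ and then $R$ directly from $D$, because both produce two i.i.d.\ draws from $D_e$ per edge and the $S$/$R$ labels are exchangeable.

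For part (ii), I would first argue that $\Ms$ as built by \cref{AlgoBipOff} is exactly the greedy matching on the $S$-graph. Restricted to the $S$-labeled copies, values are processed in decreasing order and an edge is added to $\Ms$ precisely when both its endpoints are still in $I^S$ and $J^S$; since $I^S$, $J^S$ are only updated at moments tied to $\Ms$, this is literally the greedy construction on $S$. Consequently the prices $p_i$, $p_j$ defined by \cref{AlgoBipOn} coincide with the $\Ms$-weights (or $0$ if unmatched) in both algorithms.

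Next, I would characterize the edges added to $\Eprime$ by \cref{AlgoBipOff} and show they match those picked by \cref{AlgoBipOn}. The central observation, which I expect to be the main obstacle, is that the conditions ``$i \in I^R$ and $j \in J^S$'' at the moment an $R$-value $r_e$ is processed are equivalent to ``$r_e > \max\{p_i,p_j\}$''. The nontrivial direction is: if $r_e \le p_i$, then the $S$-match of $i$ in $\Ms$ is processed strictly earlier and, per the $\Ms$-update rule, removes $i$ from both $I^S$ \emph{and} $I^R$; symmetrically for $j$ and $J^S$. Conversely, if $r_e > \max\{p_i,p_j\}$, then no $S$-value matching $i$ or $j$ in $\Ms$ has yet been processed when $r_e$ arrives, so $i \in I^R$ and $j \in J^S$. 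The $I^R$-removal after adding $e$ to $\Eprime$ guarantees at most one edge per buyer enters $\Eprime$, and since values are processed in decreasing order this is the largest-weight such edge, matching the $\argmax$ taken by \cref{AlgoBipOn}.

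Finally, for $\M$: both algorithms derive $\M$ from $\Eprime$ by scanning edges in the adversarial arrival order and greedily adding any edge whose item-endpoint is still free. With the same $\Eprime$ and the same arrival order (and the same fixed tie-breaking), this produces identical matchings. Combining (i) and (ii) gives the distributional equality of the triples $(\Ms, \Eprime, \M)$ claimed.
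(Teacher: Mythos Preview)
Your proposal is correct and follows essentially the same approach as the paper's (very terse) proof: establish distributional equivalence of $(S,R)$, observe that $\Ms$ is the greedy matching on $S$ in both algorithms, then characterize $\Eprime$ as exactly those edges $e=(i,j)$ with $r_e>\max\{p_i,p_j\}$ that realize the buyer-side $\argmax$. One small wording issue: your claimed equivalence ``$i\in I^R$ and $j\in J^S$'' $\Leftrightarrow$ ``$r_e>\max\{p_i,p_j\}$'' is not literally true as stated, since $i\in I^R$ also encodes that no larger price-feasible $R$-edge at $i$ was processed earlier; you do account for this correctly in the following sentence, so the argument goes through, but the intermediate equivalence should be phrased as ``$j\in J^S \Leftrightarrow r_e>p_j$'' and ``$i\in I^R \Leftrightarrow r_e>p_i$ and no prior $\Eprime$-addition at $i$''.
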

\begin{proof} 
    The proof is analogous to the one of \Cref{claim:equivalence}, the only difference being that here we assume vertex arrival, i.e. all edges incident to the same buyer arrive at the same time. The $R$ and $S$ graphs in the offline and online setting follows the same distribution as argued in the previous Section, and $\Ms$ is still the greedy matching on the $S$ graph, in both \Cref{AlgoBipOn} and \Cref{AlgoBipOn}. We know that $\M$ is extracted in the same way by $\Eprime$ by the two algorithms, so we just need to argue that for each fixed realization of the edge weights, the online and offline versions of the edge sets $\Eprime$ coincide.
    Considering now which edges $e=(i,j)$ from the $R$ graph are successfully added to $M'$, those are exactly the ones such that for both of their endpoints, no incident edge had an $S$-value greater than $r_e$ and no edge containing $i$ with the same property but larger value exists. This is exactly the set of edges fulfilling the argmax in \cref{AlgoBipOn}.
\end{proof}

    We are now ready to show the main result of this Section.
    \begin{theorem}
    \label{thm:vertexBip}
        For the problem of max-weight matching in a bipartite graph $G$, where vertices on one side arrive online, \cref{AlgoBipOn} is $8$-competitive in expectation, i.e. 
        \[
            8 \cdot \E{w(\M)} \ge OPT, 
        \]
        where $OPT$ denotes the weight of an optimal matching in $G$.
    \end{theorem}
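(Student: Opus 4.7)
The plan is to combine \cref{lem:MmaxvsOPT} (which gives $4\,\E{w(\Msafe)} \geq \E{OPT}$) with an additional inequality $\E{w(\M)} \geq \tfrac12 \E{w(\Msafe)}$; together these yield $8\,\E{w(\M)} \geq 4\,\E{w(\Msafe)} \geq \E{OPT}$, which is the statement. The additional inequality is established by a safe-edge argument in the spirit of the proof of \cref{thm:edgeGen}, tailored to the fact that in the bipartite vertex-arrival setting each buyer contributes at most one edge to $\Eprime$.

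For each item $j$, set $B_j = \{i : (i,j) \in \Eprime\}$. I would call an edge $(i,j) \in \Msafe$ \emph{safe} if $i$ is the first-arriving element of $B_j$ in the adversarial online order. When $(i,j)$ is safe, item $j$ is still free at $i$'s arrival, and since $(i,j)$ is $i$'s argmax among threshold-beating edges (by definition of $\Eprime$), \cref{AlgoBipOn} matches $j$ to $i$. Thus $w(\M)$ is pointwise at least the total weight of the safe edges of $\Msafe$, and it suffices to show that in expectation these safe edges capture at least half of $w(\Msafe)$.

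I would prove this edge by edge: for each $e = (i,j)$, $\Pr[e \text{ safe} \mid e \in \Msafe] \geq 1/2$ over the coin flips of \cref{AlgoBipOff}. The key observation is that the potential competitors of $e$ at $j$---edges $(i',j)$ with $r_{(i',j)} < r_{(i,j)}$, which is forced by $e \in \Msafe$ being max-weight in $B_j$---are processed, in decreasing $r$-value order, after the $R$-value of $e$. When the $R$-value of such a competitor is processed, its coin flip either admits $(i',j)$ into $\Eprime$ or adds it to $\Ms$, and in the latter case $j$ leaves $J^S$, which \emph{permanently blocks} every later competitor from ever joining $B_j$. Reading these coin flips in decreasing $r$-value order as a cascading geometric process conditioned on $e \in \Msafe$, one obtains $\Pr[B_j = \{i\} \mid e \in \Msafe] \geq 1/2$, and hence $e$ is safe with probability at least $1/2$ regardless of the adversary.

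The main obstacle is making this cascading argument airtight in the presence of interleaved $s$-values---in particular when $s_{(i,j)} > r_{(i,j)}$ or when some $s_{(i'',j)}$ lies between $r_{(i,j)}$ and a competitor's $r_{(i',j)}$, since such intermediate values could add their own edges to $\Ms$ and spoil the clean conditioning. A careful case distinction on the relative ordering of $r$- and $s$-values at $j$, combined with the monotonicity of the event $\{j \in J^S\}$ under \cref{AlgoBipOff} (once $j$ leaves $J^S$ it never returns), should take care of these subtleties and yield the claimed probability bound.
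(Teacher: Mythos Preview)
Your high-level plan coincides with the paper's: combine \cref{lem:MmaxvsOPT} with $\E{w(\M)}\ge\tfrac12\,\E{w(\Msafe)}$, and for the latter show that for each $e=(i,j)\in\Msafe$, with probability at least $\tfrac12$ over the coin flips of \cref{AlgoBipOff}, $e$ is the \emph{only} $j$-incident edge in $\Eprime$ (your event $B_j=\{i\}$). Since $B_j=\{i\}$ already forces $e\in\M$ regardless of arrival order, the detour through ``first-arriving in the adversarial order'' is harmless but unnecessary.

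Where your sketch drifts is in the mechanics. You order competitors by their $R$-values and speak of the coin flip ``when the $R$-value of a competitor is processed''; but in \cref{AlgoBipOff} the coin is tossed at the \emph{first} of an edge's two values in $A$, which need not be its $R$-value, and no coin is tossed at the second occurrence. This is exactly the ``interleaved $s$-value'' complication you flag and defer, and it means your cascading-geometric picture does not line up with what the algorithm actually does.

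The paper resolves this cleanly by walking the full $A$-sequence (not just $R$-values) after the moment $e$ enters $\Eprime$, and asking what happens at the next $j$-incident value $e'=(i',j)$. A short case analysis on whether $e'$ is free, $R$-used, or $S$-used---together with the containment $I^R\subseteq I^S$, which makes the $\Eprime$-feasibility test $(i'\in I^R,\ j\in J^S)$ strictly stronger than the $\Ms$-feasibility test $(i'\in I^S,\ j\in J^S)$---shows that at every such step the probability of an $\Eprime$-addition is dominated by that of an $\Ms$-addition. In particular, the only way to add $e'$ to $\Eprime$ is in the ``free'' case via a fair coin, and the opposite outcome of that same coin always yields an $\Ms$-addition, which removes $j$ from $J^S$ and shuts out all later competitors. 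This gives the $\tfrac12$ bound directly. Your ``careful case distinction'' is precisely this analysis; it is the heart of the argument, not a detail to be waved through.
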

    \begin{proof}

We start noting that for the greedy matching $M_S$ it holds that $\mathbb{E}[w(M_S)]\geq \frac 12 \mathbb{E}[OPT]$. Further, for the same reasons as in the edge arrival case, we have $\mathbb{E}[w(E')]\geq \frac 12 \mathbb{E}[w(M_S)]$. 

Especially, recall that for any vertex $v$, $e_v$ denotes the $v$-incident edge in $E'$ with largest weight, and $val_{E'}(v)=r_{e_v}$. Again, as in \cref{eq:r_ev}, we have
\[ \mathbb{E}\left[ \sum_{v\in V}r_{e_v} \right] \geq \mathbb{E}\left[w( M_S)\right] .\]

The definition of \emph{safe} edges, however, changes slightly: since now, all edges incident to some buyer $i$ arrive at the same time, for each $i$, there exists only at most one $i$-incident edge in $E'$ since after adding one, we erase $i$ from the available set.
Therefore, we define $e_v=(u,v)$ as \emph{safe} if for the item vertex in $j=\{u,v\}\cap J$, there exists no smaller $j$-incident edge than $e_v$ in $E'$. We note that any edge $e_v$ is \emph{safe} with probability at least $\frac{1}{2}$: after adding $e_v$ incident to $j$, when the next $j$-incident edge is considered in the offline algorithm, it is assigned to $S$ and therefore $M_S$ with probability at least $\frac 12$. This yields
\begin{align*}
    \mathbb{E}[\sum_{v\in V} r_{e_v} \mathbbm{1}_{\{e_v\text{ is safe}\}}] 
    &= \sum_{v\in V}\sum_{e\in E} r_e \cdot
        \mathbb{P}[e_v \text{ is safe}|e_v=e] \cdot \mathbb{P}[e_v=e]\\
    &\geq \sum_{v\in V}\sum_{e\in E} 
        \frac {r_e}2 \cdot \mathbb{P}[e_v=e] 
        = \frac 12 \E{\sum_{v \in V} r_{e_v}}.
\end{align*}
Consider an edge $e=(i,j)$ in the \emph{safe} set and note that at least one of its endpoints will be matched via an edge with value at least $r_e$ in $M$.
We account for this possible double-counting (since $e$ contributes to above sum for both endpoints) by dividing by $2$. Taking now the expectation over all the algorithm and the random values in $A$, we can conclude that
\[ \mathbb{E}\left[ w(M) \right]
    \geq 
    \frac 12 \mathbb{E}[\sum_{v\in V} r_{e_v} \mathbbm{1}_{\{e_v\text{ is safe}\}}] \ge \frac{1}{4} \E{w(\Ms)}
    \ge \frac{1}{8}\E{OPT} . 
    \qedhere
\]
    \end{proof}

\section{Truthful Bipartite Matching}\label{sec:truthful}
For the same model as in the previous section, with buyers on the left side of $G$ arriving online and items on the right side available offline, we design an incentive-compatible version of our prophet inequality.
The difficulty is that our $8$-approximation in \Cref{thm:vertexBip} relies on each buyer choosing the \emph{largest} incident edge that beats an item's price. However, if the price is close to the buyer's value, this edge might have very low utility. The buyer might prefer to purchase an item she has lower value for instead, but where the price is sufficiently lower, weakening our analysis to the point that we can no longer show any approximation.

Clearly, a different approach is in order. We use a pricing-based algorithm that does not rely on restricting the buyer's choice of item (among those with feasible prices that are still available). 
Instead, we ensure a sufficient weight for the resulting matching solely via appropriate thresholds for each buyer and item: this can be seen as doing a worst-case analysis over all threshold-feasible edges instead of choosing a good subset of them, explaining the incurred loss in the approximation. This allows us to make another small change and, among all threshold-feasible items, always assign the \emph{utility-maximizing} one.
In turn, the routine becomes truthful, and we have to mainly care about showing the approximation.

Our strategy for setting the aforementioned thresholds can be interpreted as follows.
Assume we take as a starting point not our prophet inequality for the bipartite case, but the one for general graphs with edge arrivals. 
Obviously, arrival orders for the bipartite one-sided model we consider here are highly restricted compared to general edge-arrival. However, assuming the graph is bipartite, each arriving edge will consist of exactly one buyer $i$ and one item $j$. Then, \cref{AlgoOn} assigns prices $p_i$ and $p_j$, where $p_i$ corresponds to the aforementioned buyer threshold we are planning to use.

Our mechanism works as follows: we determine greedy prices on the sample graph for all vertices just as described above. When a buyer $i$ arrives, we present her not with the greedy price of each item, but instead charge for each item $j$ the maximum over $p_j$ and $p_i$. Then, we let the buyer choose what she likes best, i.e. we assign a utility-maximizing item.

\begin{algorithm}
\caption{Truthful Bipartite}
\label{AlgoT}
\DontPrintSemicolon
Set $M_T=\emptyset$\;
Compute the greedy matching $\Ms$ on the graph with edge weights according to sample $S$. 
\For{all $e = (i,j) \in \Ms$}{
    Set $p_{i}=p_{j}=s_e$.
    }
\For{all vertices $k$ not matched in $\Ms$}{
    Set $p_{k}=0$.
    }  
\For{each arriving $i^* \in I$}{
        Set $F(i^*) = \{e = (i^*,j) | r_e > \max\{p_{i^*},p_j\} \text{ and } j \text{ is not matched in } M_T\}$\;
        Set $\tilde{e} = \argmax \{ r_e - max\{p_{i^*},p_j\}| \ e \in F(i^*)\}$\;
        Add $\tilde{e} = (i^*,j^*)$ to $M_T$ with weight $r_{\tilde{e}}$\;
        Charge buyer $i^*$ a price of $\max\{p_{i^*},p_{j^*}\}$
    }
    \Return $M_T$
\end{algorithm}
Here, $M_S$ is again a greedy matching on the sample graph.
As before (we omit this part of the analysis because it is analogous), one can imagine an offline version of \Cref{AlgoT} that simply draws two values for each edge, and then goes through their set $A$ in non-increasing order, deciding for each value considered whether it belongs to $R$ or $S$ (if this was not yet decided by the twin value for the same edge).

Also as before, the offline version will produce the same set of price-beating or \emph{feasible} edges $\Eprime$, which are the exact ones that \Cref{AlgoT} will consider (and add in case they are the buyer's favorite and the item is still free).

\begin{theorem}\label{thm:truthfulBip}
For the bipartite max-weight matching problem with one-sided vertex arrival, Algorithm \ref{AlgoT} is a truthful $16$-approximation, in expectation.
\end{theorem}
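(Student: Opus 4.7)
The proof has two parts. Truthfulness is immediate: the prices $p_i$ and $p_j$ are computed from the sample graph $\Ms$ alone and are therefore independent of the buyers' reports, so buyer $i$ effectively faces a take-it-or-leave-it price of $\max\{p_i,p_j\}$ for every still-available item $j$, and the mechanism assigns her the positive-utility option that maximizes $r_{(i,j)}-\max\{p_i,p_j\}$; a standard posted-price argument then shows she cannot gain from misreporting.

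For the $16$-approximation I would import the safety-based analysis from the proof of \Cref{thm:edgeGen}. The key observation is that the set $\Eprime$ of price-feasible edges (those with $r_e>\max\{p_i,p_j\}$) is defined in \Cref{AlgoT} exactly as in the edge-arrival algorithm \Cref{AlgoOn}; the only difference is that a buyer now picks her \emph{utility-maximizing} feasible edge rather than the weight-maximizing one. The offline coupling and the full probabilistic analysis of $\Eprime$ therefore transfer verbatim, so, defining $e_v$ as the largest-weight $\Eprime$-edge incident to a vertex $v$ and calling $e_v=(i,j)$ \emph{safe} when it is the unique $\Eprime$-edge at one endpoint and the largest $\Eprime$-edge at the other, we still get $\E{\sum_v r_{e_v}\mathbbm{1}_{e_v\text{ safe}}}\geq \tfrac14\E{w(\Ms)}\geq \tfrac18\E{OPT}$.

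The crux, and the main obstacle, is to show that each safe $e_v$ still yields weight at least $r_{e_v}$ to some edge of $M_T$ despite the utility-maximizing tie-break. I would split on the type of $v$. If $v=i$ is a buyer, safety at $i$ forces $F(i)\subseteq\{e_v\}$ upon her arrival: either $j$ is free and $i$ takes $e_v$, or $j$ was previously taken by some buyer $i'$ via an $\Eprime$-edge whose weight is $\geq r_{e_v}$ by safety at $j$. If $v=j$ is an item, safety at $j$ implies no other $\Eprime$-edge touches $j$, so $j$ is free when its buyer $i$ arrives and $e_v\in F(i)$; the crucial point is that safety at $i$ makes \emph{every} $\Eprime$-edge incident to $i$ have weight $\geq r_{e_v}$, so no matter which feasible edge $i$ picks to maximize utility, the resulting match has weight $\geq r_{e_v}$. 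Applying the same charging argument as in \Cref{thm:edgeGen}---each safe $e_v$ is collected at $v$ itself or at a non-safe vertex, with the only over-counting arising when $e_v$ is safe at both its endpoints, losing a factor of $2$---then yields $\E{w(M_T)}\geq \tfrac12\E{\sum_v r_{e_v}\mathbbm{1}_{e_v\text{ safe}}}\geq \tfrac{1}{16}\E{OPT}$. The factor $16$ rather than the $8$ of \Cref{thm:vertexBip} is precisely the price of surrendering the choice of feasible edge to the buyer, and the safety condition at the buyer endpoint is exactly what prevents her utility-based substitution from hurting the approximation.
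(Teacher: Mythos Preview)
Your approach is essentially the paper's: import the edge-arrival analysis of $\Eprime$ and the safe-edge machinery, then argue that the utility-maximizing choice still lets you charge each safe $e_v$ to a matched endpoint of weight at least $r_{e_v}$. The paper phrases this last step more compactly---it observes that $M_T$ is a \emph{maximal} matching in $\Eprime$ and that the safe-edge argument in \Cref{thm:edgeGen} in fact lower-bounds \emph{any} maximal matching in $\Eprime$, so no new case split (buyer vs.\ item) is needed---but your explicit bipartite case analysis arrives at the same conclusion.

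There is, however, a genuine slip in your statement of the safety condition. You write that $e_v$ is safe when it is ``the unique $\Eprime$-edge at one endpoint and the \emph{largest} $\Eprime$-edge at the other.'' This is not the definition from \Cref{thm:edgeGen}, and it is not the definition your own argument uses. In your Case~1 you invoke ``safety at $j$'' to conclude that whoever took $j$ did so via an edge of weight $\geq r_{e_v}$; this requires $e_v$ to be the \emph{smallest} $\Eprime$-edge at $j$, not the largest. Likewise in Case~2 you need every $\Eprime$-edge at $i$ to have weight $\geq r_{e_v}$, i.e.\ $e_v$ is minimal at $i$. The correct condition (and the one the probability-$1/4$ bound actually establishes) is: $e_v$ is the unique $\Eprime$-edge at the reference vertex $v$, and there is no \emph{smaller} $\Eprime$-edge at the other endpoint $u$. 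With ``largest'' your charging would fail---$j$ could be stolen by a lighter edge---so this is not merely cosmetic. Once you correct the definition, your proof goes through and matches the paper's.
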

\begin{proof}
Truthfulness is immediate: on arrival, the unmatched items are offered to the buyer for prices determined independently of the buyer's values, and she is assigned one that is maximizing her utility. By misreporting, her utility can therefore only become worse.
Now for the approximation ratio, consider the edge set $\Eprime$ defined via the  offline version of our edge-arrival \cref{AlgoOff}. Note that the arrival order, be it edge-arrival or vertex-arrival, has no influence whatsoever on the computation of set $\Eprime$ in this algorithm.
With the same arguments as in the analysis of the edge-arrival model, $\Eprime$ from the offline version is distributed exactly as the set of all price-feasible edges $E'$ in \cref{AlgoT}.
We base our proof on the observation that after establishing $\mathbb{E}[w(\Eprime)]\geq \frac 12 \mathbb{E}[w(M_S)]$, all we do in the analysis of \cref{AlgoOn} is proving a lower bound on the weight of a maximal matching in edge set $E'$.
Concretely, since $\Eprime$ is the same as for edge arrival as well as vertex arrival in the offline versions of our algorithms, all inequalities hold as before. In addition, when a buyer arrives that beats both thresholds of at least one incident edge, she also has nonnegative utility for the edge when paying the maximum over both. Therefore, if an edge in $E'$ is incident to the buyer and the item is free at her arrival, she will match to some item (not necessarily this one, due to choosing the utility-maximizing option, but some).
Therefore, no buyer $i\in I$ s.t. $\exists j\in J\text{ s.t. } e=(i,j)\in E'$ will ever stay unmatched unless all according items are matched already. On the other hand, fixing an item $j$ with at least one incident edge in $\Eprime$, $j$ will either be matched or the according buyer chooses to match to a different item. Together, this implies that \cref{AlgoT} chooses indeed a maximal matching in edge set $\Eprime$.
Now consider again the edges $\{e_v|v\in V\}$ defined as before for edge arrival (and actually, computed as before, just on a subset of possible instances containing only the bipartite graphs). Recall that for every vertex $v$, $e_v$ is the largest-weight edge in $\Eprime$ incident to $v$, if any. 
For each edge $e_v=(i,j)$ that is also \emph{safe}, i.e., $E'$ contains no lower-weight edges incident to $i$ or $j$, neither $i$ nor $j$ can be matched in $M_T$ via a smaller edge than $e$.
At the same time, since $M_T$ is maximal in $\Eprime$, either $i$ or $j$ must be matched, yielding that $M_T$ recovers at least half of the summed-up values $\sum_{v\in V}r_{e_v}\mathbbm{1}_{e_v\text{ is safe}}$ analogously to the edge-arrival case.
\end{proof}

\section{Conclusion}
The single-sample paradigm is proving more and more powerful, even for combinatorially interesting problems like max-weight matching. We have designed constant-approximation prophet inequalities for a number of settings, and our results hold without restrictive assumptions like limited degree. However, approximation ratios clearly seem to depend on the generality of arrival model, independence assumptions on the distributions, the assumed structure of the underlying graphs, and whether or not one assumes the presence of selfish agents. In the future, it will be an interesting endeavor to fully understand these relationships of model and ratio, and derive tighter approximations as well as lower bounds. Finally, a range of other combinatorial problems for which prophet inequalities with small competitive ratio are known to exist in the full information model, are left to explore in the single-sample setup.

\section*{Acknowledgement}
The authors would like to thank David Naori, Danny Raz and Haim Kaplan for pointing out a mistake in a previous version of one of our proofs.

\bibliography{references}

\end{document}